\newcolumntype{x}[1]{>{\centering\hspace{0pt}}p{#1}}
\DeclareMathAlphabet{\set}{OT1}{pzc}{m}{it}
\def\p@subsection{}
\def\p@subsubsection{}
\def\l@subsection#1#2{}
\def\l@subsubsection#1#2{}
\newcommand{\T}[1]{\textrm{#1}}
\newtheorem{theorem}{Theorem}
\theoremstyle{definition}
\theoremstyle{remark}
\begin{document}

\title{Supplementary Information -- Controllability of complex networks: input node placement restricting the longest control chain}

\author{Samie Alizadeh}
\affiliation{Department. of Computer Engineering, K. N. Toosi University of Technology, Tehran, Iran}
\author{M\'arton P\'osfai}
\affiliation{Department of Network and Data Science, Central European University, Vienna, Austria} 
\author{Abdorasoul Ghasemi}
\affiliation{Department. of Computer Engineering, K. N. Toosi University of Technology, Tehran, Iran}

\maketitle

\tableofcontents

\newpage

\section{Computational complexity of the LCC-constrained minimum input problem}\label{app:np-complete}

In the following we prove that the computational complexity of the LCC-constrained minimum input problem belongs to the NP-complete class by showing that its corresponding decision problem is both NP and NP-hard\cite{cormen2009introduction}.
To prove that it is NP, we provide a polynomial-time algorithm to verify for every node set $\set S$ whether $\set S$ is a valid input node set or not.
Proving that it is NP-hard involves reducing the minimum dominating set problem, a known NP-complete problem, to the LCC-constrained minimum input problem, where the reduction is carried out in polynomial time.

\begin{theorem}
For a given directed network $\set G(\set V, \set E)$ and positive integer $\ell>0$, a node set $\set S\in \set V$ is a valid input node set if
\begin{enumerate}[(i)]
\item there exists a matching in the bipartite representation $\set B$ of network $\set G$, such that $\set S$ is the set of unmatched nodes;
\item and $\set S$ is a dominating set in the accessibility graph $\set G_\ell$ of network $\set G$.
\end{enumerate}
Determining whether a valid input node set $\set S \subseteq \set V$ exists such that $\lvert\set S\rvert = M>0$ is an NP-complete problem.
\end{theorem}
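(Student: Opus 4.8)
The plan is to establish the two halves of NP-completeness separately: first that the decision problem lies in NP by exhibiting a polynomial-time verifier, and then that it is NP-hard by a polynomial reduction from minimum dominating set. The candidate certificate in both arguments is the node set $\set S$ itself, and the whole difficulty is getting the matching condition~(i) and the domination condition~(ii) to cooperate.

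For membership in NP, I would give a verifier that takes $\set S$ and checks (i) and (ii) directly. To test (i), construct the bipartite representation $\set B$, delete the in-copies $\{v^- : v \in \set S\}$ (which forces exactly those nodes to remain unmatched), and compute a maximum bipartite matching by, e.g., Hopcroft--Karp; condition (i) holds precisely when this matching saturates every surviving in-copy $\{v^- : v \notin \set S\}$, since any matching whose unmatched set is exactly $\set S$ must leave all of $\set S$ unmatched and saturate the rest. To test (ii), compute the accessibility graph $\set G_\ell$ (reachability within $\ell$ steps, obtained by breadth-first search from each node or by $\ell$-fold Boolean matrix powers) and verify in one pass that every node is either in $\set S$ or has an in-neighbour in $\set S$. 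Both checks are polynomial, so the problem is in NP.

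For NP-hardness I would reduce from minimum dominating set. Given an instance $(H,k)$ with undirected $H=(V_H,E_H)$, build a directed network $\set G$ on vertex set $V_H$ by replacing each undirected edge with a pair of oppositely directed edges and attaching a self-loop to every node; then set $\ell=1$ and $M=k$. The self-loops are the key gadget: since every node carries the bipartite edge $v^+\!-\!v^-$, any subset $\set S\subseteq V_H$ can be realized as the unmatched set by matching each $v\notin\set S$ to itself, so condition~(i) becomes vacuous and is decoupled from the domination requirement. With $\ell=1$ and bidirectional edges, $\set G_1$ coincides with $H$ up to self-loops, so that a dominating set of $\set G_1$ is exactly a dominating set of $H$. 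The construction is manifestly polynomial.

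The hard part, and the real content of the reduction, is exactly this decoupling: because (i) and (ii) interact, a careless reduction would let the matching structure shrink or distort the set of admissible candidates. I would therefore verify two points. First, that the self-loops leave the domination structure of $\set G_1$ unchanged, since a node dominates itself whether or not it carries a self-loop. Second, that the equivalence respects the ``size exactly $M$'' phrasing, via a monotonicity argument: any superset of a dominating set is again dominating and still satisfies the now-vacuous condition~(i), and $\set S=\set V$ is always valid, so a valid input set of size exactly $M=k$ exists if and only if $H$ has a dominating set of size at most $k$. Combining the verifier with this reduction gives NP-completeness.
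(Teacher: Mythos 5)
Your proof is correct and takes essentially the same route as the paper: the same verifier for membership in NP (prune the in-copies of $\set S$ from $\set B$, run Hopcroft--Karp to test condition (i), build $\set G_\ell$ by breadth-first search to test condition (ii)), and the same self-loop gadget with $\ell=1$ to reduce dominating set, the only immaterial difference being that the paper reduces directly from MDS on a directed network while you start from an undirected instance and bidirect its edges. Your explicit monotonicity argument reconciling the ``size exactly $M$'' phrasing with ``size at most $k$'' is a small detail the paper glosses over, and is a welcome addition rather than a change of approach.
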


\begin{proof}

To prove that the problem is NP, we present the following algorithm to check the validity of an input node set $\set S$.
To check if $\set S$ satisfies the matching condition (i), we obtain a pruned bipartite network $\set B^\prime$ by taking $\set B$ and removing all vertices from $\set V^-$ that correspond to vertices in $\set S$.
If there exists a prefect matching in $\set B^\prime$ then there exists a matching in $\set B$ such that only nodes in $\set S$ are unmatched.
We can check whether a perfect matching exists in a bipartite network using, for example, the Hopkroft-Karp algorithm which has worst case runtime $O(\sqrt{\lvert\set V\rvert} |\lvert \set E\rvert)$~\cite{hopcroft1973n}.

To check the accessibility condition (ii), we fist construct the accessibility graph $\set G_\ell$ by calculating the distance between all node pairs in $\set G$, which can be done using breath-first search in at most $O\left(\lvert\set V\rvert (\lvert\set V\rvert + \lvert\set E\rvert)\right)$ steps~\cite{cormen2009introduction}.
Verifying that $\set S$ is a dominating set in $\set G_\ell$ can be done, for example, by iterating over all edges in $\set G_\ell$ which takes at most $O(\lvert\set V\rvert^2)$ steps.
Consequently, verifying a solution can be done in a polynomial time.

To prove NP-hardness, we reduce the NP-hard minimum dominating set (MDS) problem to our problem~\cite{hartmanis1982computers}.
For a given directed network $\set G(\set V, \set E)$ a set of nodes $\set D\in\set V$ is a dominating set if each node in $\set V$ is either in $\set D$ or has a link pointing at it that starts from a node in $\set D$.
To reduce the MDS to our problem, we first create an augmented network $\set G^\prime$ by adding a self-loop to all nodes in $\set G$, solving the LCC-constrained minimum input problem with $\ell=1$ in $\set G^\prime$ is equivalent to solving the MDS in $\set G$.
To see this, note that the set of self-loops provide a perfect matching in $\set G^\prime$; therefore we only have to check that the minimum input set satisfies the accessibility condition.
The accessibility graph $\set G_1^\prime$ is the same as $\set G^\prime$ and self-loops do not affect the dominating sets; therefore the minimum input node set in $\set G^\prime$ for $\ell=1$ is also a minimum dominating set in $\set G$.

Therefore, the computational complexity of the LCC-constraint minimum input problem belongs to the NP-complete class.
\end{proof}

\clearpage

\section{Efficient integer linear programming formulation of the LCC-constrained minimum input problem}\label{app:alternative ILP}

Analyzing the performance of the integer linear programming (ILP) formulation (8) in the main text shows that a na\:ive implementation of the matching results in poor performance and run-time quickly increases with the number of links in the network (Fig.~\ref{fig: Explored nodes}).
The performance can be improved by a constant factor using the graph-cycling formulation introduced by Ref.~\cite{iudice2015structural}.

The basic idea of this approach comes from this idea that a matching partitions the digraph into disjoint paths, see figure (\ref{fig: Graph cycling model}). Thus, they partition the digraph into disjoint cycles to find a matching. 

We start from a directed graph $\set G(\set V, \set E)$ (Fig.~\ref{fig: Graph cycling model}a). We begin by creating an augmented graph $\set G^\prime(\set V^\prime, \set E^\prime)$ by adding an auxiliary node $x$ to $\set G$, representing the external control signals.
We connect this node to all network nodes by a pair of out-going and in-coming links (Fig.~\ref{fig: Graph cycling model}b).
Then, we partition $\set G^\prime$ with cycles in a way that cycles are only allowed to overlap at node $x$ (Fig.~\ref{fig: Graph cycling model}c).
Links that participate in cycles in $\set G^\prime$ form a matching in $\set G$, and the disjoint cycles correspond to disjoint paths in the matching.
Therefore the links exiting the auxiliary node as a part of cycle point at input nodes, which are unmatched nodes in the matching (Fig.~\ref{fig: Graph cycling model}d).
Note that there are different ways to partition $\set G^\prime$ into disjoint cycles.
To find a maximum matching, we aim to partition the $\set G^\prime$ with a minimum number of cycles.

To define the associated ILP formulation, we define binary variables $y_{i \to j} \in \{0,1\}$ corresponding to each link $(i,j)$ in the augmented graph. Here, $y_{i \to j}=1$ means that $(i,j)$ is a part of cycle partitioning, otherwise, $y_{i \to j}=0$. To ensure that the solution is a cycle partition, each node $v$ in the network is forced to have exactly one in-coming and one out-going link:
\begin{subequations}
\begin{align}
\sum_{j \in {\set V_v}^+} y_{v \to j} = 1 \label{eq: inbounding constraint}\\
\sum_{i \in {\set V_v}^-} y_{i \to v} = 1 \label{eq: outgoing constraint}
\end{align}
\end{subequations}
Then, among all possible cycle partitions, we select the one that creates the minimum number of cycles. The number of cycles can be determined by the number of outgoing links from the auxiliary node that participate in cycles.
Thus, we minimize the $\sum_{j \in \set V} y_{x \to j}$ objective function. Note that the number of out-going and in-coming links of the auxiliary node must be equal, thus they have to satisfy the $\sum_{j \in \set V} y_{x \to j} = \sum_{i \in \set V} y_{i \to x}$  constraint.
Next, we should ensure the accessibility. Thus, for each node $i$, the auxiliary node must connect to at least one node of set $\set V_i^{\ell}$, where $\set V_i^{\ell}$ is the set of nodes from where we can reach node $i$ in at most $\ell$ steps.
This is enforced by the $\sum_{j \in \set V_i^{\ell}} y_{x \to j} \geq 1$ constraint.
Putting it all together, we obtain the following ILP formulation:

\begin{subequations}
\label{eq: problem1-ILP-graph cycling}
\begin{align}
& \hspace{3mm} \min_{y_{i \to j} \in \set E'} \hspace{5mm} \displaystyle\sum_{j \in \set V} y_{x \to j} \label{eq: problem1-ILP-graph cycling-a} \\
\textrm{subject to} & \nonumber \\
& \hspace{3mm} y_{i \to j} \in \{0,1\} \hspace{17mm} \forall \,\, {(i \to j) \in \set E'} \label{eq: problem1-ILP-graph cycling-b}\\
& \hspace{2mm} \displaystyle\sum_{j \in {\set V_v}^+} y_{v \to j} = 1 \hspace{15.7mm} \forall \, {v = 1,...,N} \label{eq: problem1-ILP-graph cycling-c}\\
& \hspace{2mm} \displaystyle\sum_{i \in {\set V_v}^-} y_{i \to v} = 1 \hspace{16.8mm} \forall \, {v = 1,...,N} \label{eq: problem1-ILP-graph cycling-d}\\
& \hspace{2.2mm} \displaystyle\sum_{j \in \set V_v^{\ell}} y_{x \to j} \geq 1 \hspace{17.2mm} \forall \, {v = 1,...,N} \label{eq: problem1-ILP-graph cycling-e} \\
& \hspace{2.5mm} \displaystyle\sum_{j \in \set V} y_{x \to j} = \displaystyle\sum_{i \in \set V} y_{i \to x} \label{eq: problem1-ILP-graph cycling-f}
\end{align}
\end{subequations}

Figure~\ref{fig: Explored nodes} compares the performance of the na\:ive ILP formulation (8) and the graph-cycling implementation in Eq.~(\ref{eq: problem1-ILP-graph cycling}).
We measure the run-time by the number of nodes searched in the branch-and-bound tree to solve the problem.
We find that although the run-time for both formulations grows exponentially, the graph-cycling implementation is significantly faster, allowing us to explore larger network instances.

\begin{figure}[h]
\begin{subfigure}
\centering
\includegraphics[width=.87\linewidth]{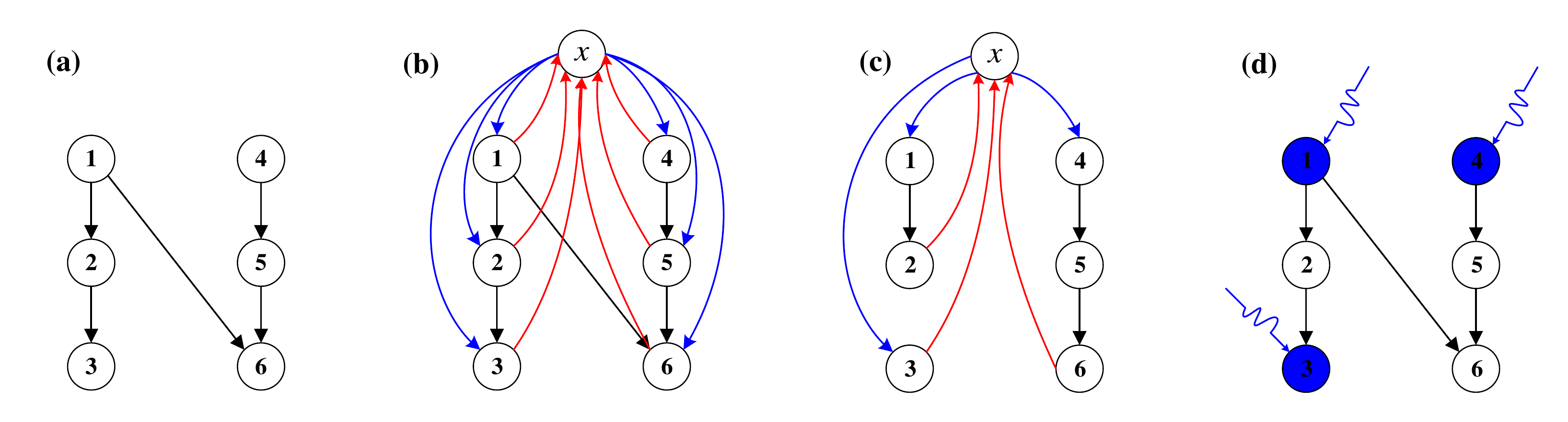}
\end{subfigure}
\caption{\textbf{Graph-cycling ILP formulation.} \textbf{(a)}~A small directed network. \textbf{(b)}~We create an augmented graph by adding an auxiliary node and connecting it to all nodes via a pair of in-coming and out-going links. \textbf{(c)}~The partitioned graph for $\ell=1$. Here, the directed graph is partitioned into three disjoint cycles. \textbf{(d)}~Input nodes are the nodes that have a link pointing at them that originates from the auxiliary node and is part of a cycle.}
\label{fig: Graph cycling model}
\end{figure}

\begin{figure}[!h]
\centering
\includegraphics[scale=.45]{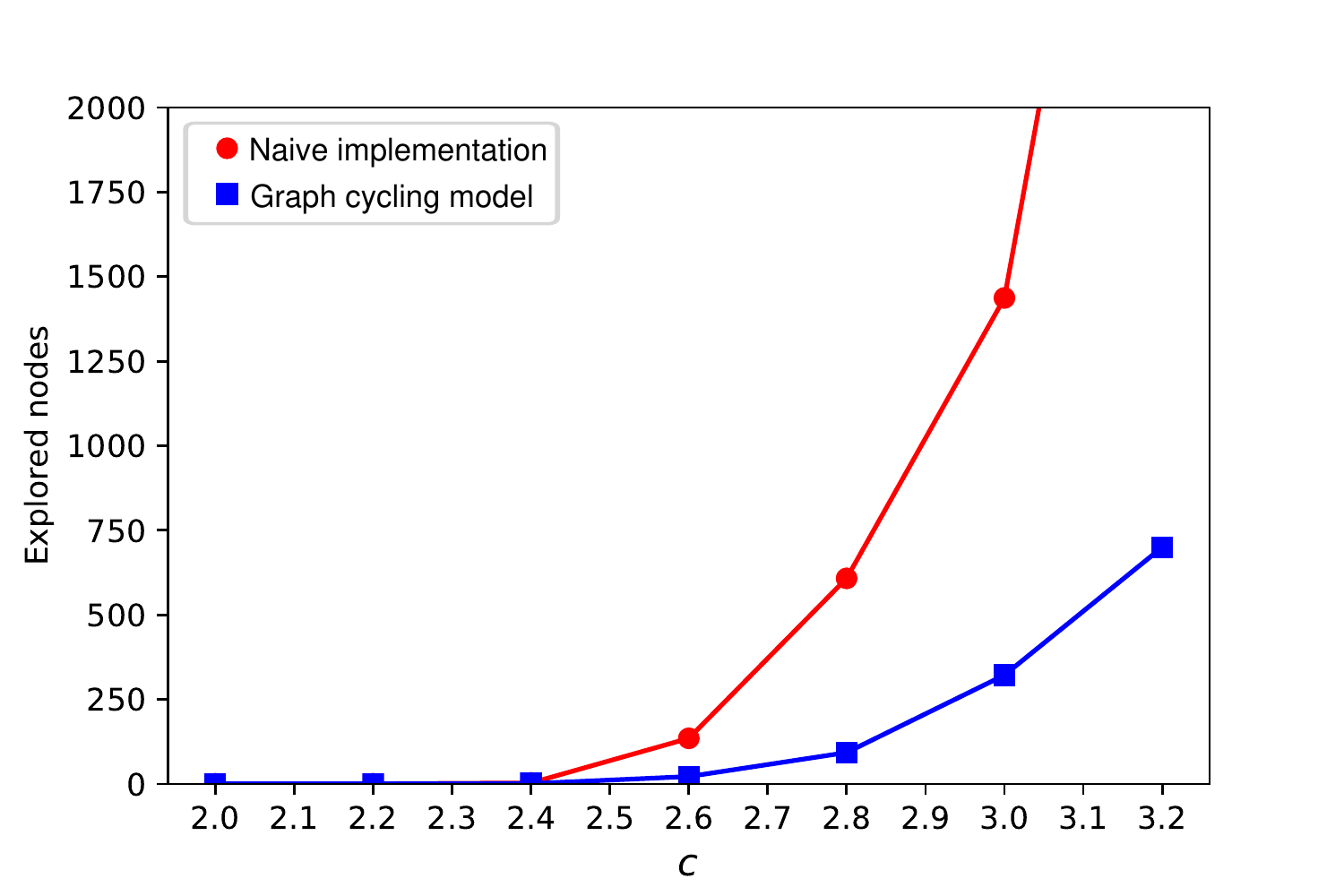}
\caption{\textbf{Comparing the performance of the ILP implementations}. We measure the run-time of the models by the number of nodes searched by the ILP solver in the branch-and-bound tree. We explored ER networks with $N = 500$ nodes. Each data point represents an average of 100 independent instances.}
\label{fig: Explored nodes}
\end{figure}

\clearpage

\section{Real networks}

\begin{table}[h]
\centering
\small
\caption{The collection of real networks. We provide the number of nodes $N$, number of links $L$, average degree $c$, degree heterogeneity $H = \textrm{max}(H^{in}, H^{out})$, where $H^{in/out}$ can be defined by $ H^{in/out}=\frac{1}{cN^2}\sum_i \sum_j (k_i^{in/out}-k_j^{in/out})$\cite{liu}, $c$ is a constant, and $k_i^{in/out}$ shows input/output degree of node.}
\begin{tabular}{ll|rrrrc}
\multicolumn{2}{c|}{Networks} & $\quad\quad N$ & $\quad\quad\quad\quad L$ & $\quad\quad c$ & $\quad\quad H$ & ~~~Ref.~~~ \\ \hline\hline

\multirow{2}{*}{Electric circuits}   &  s208 & 122 & 189  & 1.5 & .63  & \cite{milo2002network}    \\
                                                       &  s838 & 512 & 819  & 1.5 & .64  & \cite{milo2002network}   \\ \hline

\multirow{4}{*}{Food Web}        & Baywet      & 128   & 2,106  & 16.4  & .8      & \cite{baird1998assessment}  \\
					& Mangwet    & 97    & 1,492  & 15.3  & .98    &  \cite{martinez1991artifacts} \\
					& Ythan         & 135  & 601      & 4.4    & 1.42  & \cite{ulanowicz2005network} \\
                                                     & Littlerock    & 183  & 2,494   & 13.6  & 1.36  & \cite{ulanowicz2005network}  \\ \hline

\multirow{2}{*}{Transcript}           & E.Coli    & 418  & 519     & 1.2  &  1.8      & \cite{milo2002network}  \\
                                                       & Yeast     & 688  & 1,079  & 1.5   &  1.84   & \cite{milo2002network}  \\ \hline

\multirow{2}{*}{Web of Trust}           & CentralCoast   & 943  & 1,227     & 1.3  &  1.7      & \cite{levy2018innovation}  \\
                                                             & Napa-rev        & 646  & 926       & 1.4  &  1.7  & \cite{levy2018innovation}  \\ \hline

\multirow{2}{*}{Metabolic}           & C.Elegans    & 1,173  & 2,864  & 2.4  & .7      & \cite{jeong2000large}  \\\
                                                       & Yeast           & 1,511  & 3,833  & 2.5  & .78    & \cite{jeong2000large} \\\hline

\multirow{1}{*}{Airport}           & USairport    & 1,574  & 28,236  & 17.9  & 1.51      & \cite{opsahl2010node} \\ \hline

\multirow{2}{*}{Web}           & polblogs-rev    & 1,224  & 19,022  & 15.5  & 1.51      & \cite{jeong2000large}  \\
			               & Stanford    & 281,903  & 2,312,497  & 8.2  & 1.26      & \cite{leskovec2009community}  \\\hline

\multirow{3}{*}{Social}           & prisoninmate    & 67  & 182  & 2.7  & .81      & \cite{jeong2000large}  \\
			                 & Epinions1         & 75,879  & 508,524  & 6.7  & 1.71      & \cite{richardson2003trust}  \\
			                 & WikiVote          & 7,115  & 139,311  & 14.5  & 1.69      & \cite{leskovec2010signed} \\\hline

\multirow{2}{*}{Communication}           & Email   & 265,214  & 420,045     & 1.5  &  1.58      & \cite{leskovec2007graph}  \\
                                                             & WikiTalk        & 2,394,385  & 5,021,410       & 2.1  &  .14  & \cite{leskovec2010signed}  \\ \hline

\multirow{2}{*}{Citation}                    & HepPh   & 34,546  & 420,045     & 12.2  &  1.58      &\cite{leskovec2005graphs}  \\
                                                             & HepTh   & 27,770  & 352,807            & 12.7  &  1.7  & \cite{leskovec2005graphs}  \\ \hline

\multirow{1}{*}{Peer to Peer}           & Gnutella31    & 62,586  & 147,892  & 2.3  & .88      & \cite{ripeanu2002mapping} \\ \hline
                                                  
\end{tabular}
\label{table:realnet_properties}
\end{table}

\clearpage
                                              
\begin{table}[h]
\small
\centering
\caption{The result of created core and required cost in real networks for different LCC constraints.}
\begin{tabular}{c | lll | lll | lll}
     &   \multicolumn{3}{l}{$\ell=1$} & \multicolumn{3}{l}{$\ell=2$} & \multicolumn{3}{l}{$\ell=3$}\\ \hline
\multicolumn{1}{c|}{Networks}  & ~$n_\T{i}(\ell)$~ & ~$n_\T{core}$~ & ~$C(\ell)$~  & ~$n_\T{i}(\ell)$~ & ~$n_\T{core}$~ & ~$C(\ell)$~ & ~$n_\T{i}(\ell)$~ & ~$n_\T{core}$~ & ~$C(\ell)$~ \\ \hline \hline
\multicolumn{1}{l|}{s208}   & .46  & .45 & .22  & .31 & .38 & .08 &  .29 & .30 & .04    \\
\multicolumn{1}{l|}{s838}   & .45  & .48 & .22  & .33 & .46 & .09 &  .29 & .42 & .06    \\
\multicolumn{1}{l|}{Baywet}   & .28  & .91 & .03  & .28 & .67 & .007 &  .26 & .38 & .007    \\
\multicolumn{1}{l|}{Mangwet}   & .30  & .84 & .07  & .31 & .55 & .01 &  .31 & .31 & 0    \\
\multicolumn{1}{l|}{Ythan}   & .52  & .24 & .01  & .51 & .04 & 0 &  .51 & .04 & 0    \\
\multicolumn{1}{l|}{Littlerock}   & .66  & .85 & .01  & .65 & .85 & .01 &  .65 & .85 & .005   \\
\multicolumn{1}{l|}{E.Coli}   & .74  & 0 & 0  & .74 & 0 & 0 &  .74 & 0 & 0    \\
\multicolumn{1}{l|}{Yeast}   & .82  & .001 & 0  & .82 & .001 & 0 &  .71 & .001 & 0    \\
\multicolumn{1}{l|}{CentralCoast}   & .77  & .003 & .006  & .76 & 0 & 0 &  .76 & 0 & 0    \\
\multicolumn{1}{l|}{Napa-rev}   & .77  & 0 & .001  & .77 & 0 & 0 &  .77 & 0 & 0    \\
\multicolumn{1}{l|}{C.Elegans}   & .28  & .65 & .14  & .33 & .64 & .03 &  .30 & .50 & .03    \\
\multicolumn{1}{l|}{USairport} & .44 &   .76 &  .07   & .38 & .13 & .01 & .37 & .09 & .001     \\ 
\multicolumn{1}{l|}{polblogs}   & .37  & .37 & .01  & .35 & .006 & 0 &  .35 & .005 & 0    \\
\multicolumn{1}{l|}{Stanford}    & .45 &   .81  &   .14    & .37 &  .72 & .05 & .35 & .61 &  .05   \\
\multicolumn{1}{l|}{prisoninmate}   & .34  & .76 & .19  & .19 & .74 & 0.04 &  .16 & .71 & .02    \\
\multicolumn{1}{l|}{Epinions1}   & .63  & .12 & .08  & .56 & .04 & .01 &  .55 & .02 & 0    \\
\multicolumn{1}{l|}{WikiVote}    &  .67 &  .0007  & .007   & .70 & 0 & 0.0001 & .70 & 0 &  0      \\ 
\multicolumn{1}{l|}{Email}      & .92  & .01     &  .001    &  .92 & 0.00001 & .0008 & .92 & 0 &  0   \\
\multicolumn{1}{l|}{WikiTalk} & .96  &   .001  &  .0001   & .96  & 0 & 0 & .96 & 0 &  0  \\ 
\multicolumn{1}{l|}{HepPh} & .30  & .23 & .07  & .24 & .12 & .008 & .23 &  .09 &  .003   \\
\multicolumn{1}{l|}{HepTh} & .29 &   .40 &  .07   & .23 & .23 & .01 & .22 & .20 & .008     \\ 
\multicolumn{1}{l|}{Gnutella31} & .73  &  .60     &   .0001    & .73  & .0005 & 0  &.73 & .0005 &  0 \\ \hline
\end{tabular}
\label{table:realnet_results}
\end{table}

\clearpage

\bibliography{ref}

\end{document}